\begin{document}
\title{PSPACE-completeness of Bloxorz
\\and of Games with 2-Buttons}

\author{Tom C. van der Zanden \and Hans L. Bodlaender}

\institute{Department of Computer Science, Utrecht University \\ \{\email{tom@tomvanderzanden.nl}, \email{H.L.Bodlaender@uu.nl}\}}

\newlength{\problemoffset}
\setlength{\problemoffset}{0in}

\newcommand{\decision}[3]{
\begin{list}{}{
\setlength{\leftmargin}{\problemoffset}
\setlength{\rightmargin}{\problemoffset}
\setlength{\parsep}{0pt}
\setlength{\itemsep}{2pt}
\setlength{\topsep}{\itemsep}
\setlength{\partopsep}{\itemsep}
}
\item
{\textsc{#1}}
\item
{\textbf{Instance:} #2}
\item
{\textbf{Question:} #3}
\end{list}
}

\maketitle

\thispagestyle{plain} 

\begin{abstract}
Bloxorz is an online puzzle game where players move a 1 by 1 by 2 block by tilting it on a subset of the two dimensional grid. Bloxorz features switches that open and close trapdoors. The puzzle is to move the block from its initial position to an upright position on the destination square. We show that the problem of deciding whether a given Bloxorz level is solvable is \PSPACE-complete and that this remains so even when all trapdoors are initially closed or all trapdoors are initially open. We also answer an open question of Viglietta \cite{vig14}, showing that 2-buttons are sufficient for \PSPACE-hardness of general puzzle games. We also examine the hardness of some variants of Bloxorz, including variants where the block is a 1 by 1 by 1 cube, and variants with single-use tiles.
\end{abstract}

\section{Introduction}
We study the computational complexity of the online puzzle game Bloxorz. We show that deciding whether it is possible to reach the goal square in a level of Bloxorz featuring switches and trapdoors is \PSPACE-complete, which we will prove by reduction from \textsc{Nondeterministic Constraint Logic}. We first give a proof that requires us to be able to choose the initial state (open or closed) of each trapdoor. We will then also show that the problem remains \PSPACE-complete even when the trapdoors are required to be initially all closed or all open.

Viglietta \cite{vig14} studied the complexity of general computer games featuring certain common elements such as ``destroyable paths, collectible items, doors opened by keys or activated by buttons or pressure plates, etc''. Viglietta established that a game featuring 2-buttons and doors is \NP-hard and that a game featuring 3-buttons is \PSPACE-hard. Viglietta asked whether this result could be improved to show that a game with 2-buttons is \PSPACE-hard. We settle this question, showing that 2-buttons are indeed sufficient for \PSPACE-hardness.

We also examine several variants of Bloxorz, and compare the hardness of decision and optimization versions of these. We show that Bloxorz with single-use tiles but without switches or trapdoors is \NP-complete. Bloxorz with a 1 by 1 by 1 cube instead of the 1 by 1 by 2 block is \NP-complete when both single-use tiles and switches and trapdoors are included, but becomes polynomially solvable if either is not present in the level.

\subsection{Bloxorz}
Bloxorz \cite{dx} is a ``brain twisting puzzle game of rolling blocks and switching bridges'' by Damien Clarke. A 1 by 1 by 2 block is tilted around a subset of the two dimensional grid. Games using similar mechanics to Bloxorz are available on multiple platforms, including Android \cite{caveman1} and Apple iOS \cite{caveman2}.

\begin{figure}[t]
\centering
\includegraphics[scale=0.5]{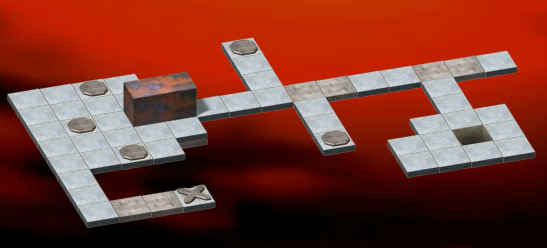}
\caption{An example Bloxorz level.}
\label{fig:level}
\end{figure}

The block can be in two states: lying down on a rectangular side, or standing up on a square face. From a standing position, the player may make a tilting move (Figure \ref{fig:move1}) to place the block on its side. From a lying position, the player can either make a tilting move (Figure \ref{fig:move1}) to stand the block up, or a rolling move (Figure \ref{fig:move2}) after which the block will still be in a lying state.

\begin{figure}[h]
     \centering
     \hfill
     \subfloat[][Tilting move] {
         \includegraphics[height=3cm]{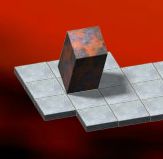}
         \label{fig:move1}
     }
     \hfill
     \subfloat[][Rolling move] {
         \includegraphics[height=3cm]{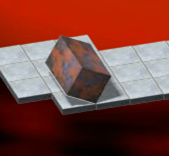}
         \label{fig:move2}
     }
     \hfill\null
     \caption{The two types of move available: (a) tilting and (b) rolling}
\end{figure}

Not all moves are possible: the player may only make a move if after that move the block is still fully supported by the game level. In Bloxorz, if the player does attempt to make such a move the block will topple off the game stage and the player will have to restart the level.

The goal of the puzzle is to reach some specified goal square (Figure \ref{fig:goal}) and get the block to fall through it. This implies the block must be in a standing orientation upon reaching the goal square but this constraint does not contribute to the hardness.

The online version of Bloxorz features various gadgets intended to make the game more challenging: switches that open and close trapdoors, switches that can only be triggered with the block in a specific orientations, weak tiles that will not support the block standing up, block-splitting portals and more. We will consider the version of Bloxorz with only switches, trapdoors and of course, regular squares. An example of a trapdoor and switch is shown in Figure \ref{fig:bridge}.

A \emph{trapdoor} is a special game square that can be either open or closed. If the trapdoor is open, it is not allowed for the block to be on top of it. Each trapdoor is controlled by exactly one switch square and each switch only controls one trapdoor.

A \emph{switch} is a special game square that is associated with exactly one trapdoor. If a move places the block over a switch, the state of the associated trapdoor is toggled (from open to closed and vice-versa). A switch may be triggered more than once. If a move places a block over 2 switches then they will both be triggered.

\begin{figure}[t]
     \centering
     \hfill
     \subfloat[][Trapdoor and switch] {
         \includegraphics[height=3cm]{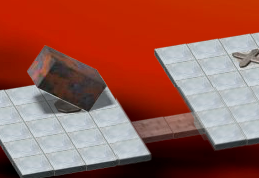}
         \label{fig:bridge}
     }
     \hfill
     \subfloat[][Goal square] {
         \includegraphics[height=3cm]{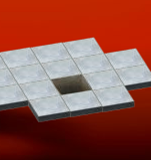}
         \label{fig:goal}
     }
     \hfill
     \subfloat[][Reaching the goal square] {
         \includegraphics[height=3cm]{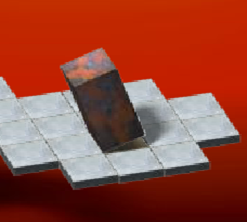}
     }
     \caption{More game elements: (a) switches and bridges, (b),(c) goal square}
\end{figure}

If a move places a block over a switch and its associated trapdoor simultaneously, the effect of pressing the switch will be considered before checking the legality of the move.  Our results remain valid if we check the legality of the move before activating the switch.

Note that in the original version of the game trapdoors are instead called ``bridges'' and are of size $1\times2$ but our constructions will use $1\times1$ trapdoors. However, our proofs can easily be adapted to work with $1\times2$ trapdoors instead.

We now formally define \textsc{Bloxorz} as decision problem:

\decision{Bloxorz}{A level of Bloxorz, given as a list of game tiles (normal tiles, trapdoors and switches) with their positions, a one-to-one mapping between trapdoors and switches, for each trapdoor an initial state (open/closed) and start/goal squares.}{Is there a sequence of legal moves from the start to the goal square?}

The precise encoding used for problem instances is not important. We consider the size of a level to be the number of tiles in it.

\subsection{Previous work}
Buchin \cite{buchin12} showed that solving rolling block mazes is \PSPACE-complete. The block in Bloxorz moves in exactly the same way as a block in a rolling block maze, but a rolling block maze may feature multiple blocks (which may interfere with each other's movement) while Bloxorz only features a single block. Bloxorz also differs from rolling block mazes in that it features switches and trapdoors. These are required for the hardness since rolling block mazes with only a single block are solvable in polynomial time.

Viglietta \cite{vig14} established several metatheorems regarding the complexity of games with buttons\footnote{In the conference version of his paper \cite{vig12}, Viglietta uses the term ``switch'' instead of ``button''. In the final version of the paper \cite{vig14}, he uses ``button'' to describe the same object. We use this newer terminology since it better describes the functionality of the object.}, doors and pressure plates.

A \emph{door} is a special game square that may be either in a closed or in an open state, and the avatar may pass through the door if and only if it is open. Bloxorz features trapdoors which are similar to doors except that the notions of open and closed are reversed (i.e. the avatar may pass over the trapdoor if and only if it is closed).

The state of a door may be altered by pressure plates and buttons. Viglietta defines a \emph{pressure plate} as a special game square ``that is operated whenever the avatar steps on it, and its effect may be either the opening or the closure of a specific door''.

In addition, Viglietta defines \emph{buttons} which work similar to pressure plates except that the player has a choice whether to trigger the button or not, rather than a pressure plate which is immediately activated when the player steps on it.

In Bloxorz, a switch instead toggles the state of a trapdoor between open and closed. A pressure plate or button may either open or close a door, but may not toggle between the two states. Like a pressure plate, a switch is toggled when the block moves over it. Due to the $1$ by $1$ by $2$ shape of the block two switches may be triggered simultaneously.

A single door may be controlled up to to two pressure plates: one that opens it, and one that closes it. In Bloxorz, one trapdoor is controlled by exactly one switch.

A $k$-button may control up to $k$ doors simultaneously, possibly opening some while closing others. Viglietta established that a game featuring doors and $k$-buttons with $k\geq2$ is \NP-hard, while a game with $k\geq3$ is \PSPACE-hard.

Viglietta poses the question whether his result for $k=2$ can be improved upon, that is, if a game with $2$-buttons is \PSPACE-hard. We show that this is the case.

Our reduction is based on the work by Hearn and Demaine \cite{hearn02} on Nondeterministic Constraint Logic.

\section{Nondeterministic Constraint Logic}
First, we will briefly introduce (restricted) Nondeterministic Constraint Logic (NCL).

A \emph{constraint graph} is an undirected graph $G=(V,E)$ with for each vertex and edge a weight that is a positive integer.
We only need to consider weights of 1 and 2. The weight of a vertex is called its \emph{minimum inflow}.

A \emph{configuration} is an assignment of an orientation to each edge, which is \emph{legal} if and only if, for each vertex $v$,
the sum of the weights of the edges pointing into $v$ is at least the minimum inflow of $v$.

An \emph{edge reversal} is the operation that reverses the orientation of an edge. Thus, 
its weight starts counting towards the other vertex. A reversal of $(v,w)$
is \emph{legal} if the minimum inflow constraints for $v$ and $w$ remain satisfied.

We will consider only constraint graphs featuring two special types of vertices:

\begin{itemize}
\item An \emph{AND vertex} is a vertex of degree 3, with minimum inflow 2 and with incident edge weights of $1$, $1$ and $2$. It works like a logical AND gate because either the weight 2 edge must be directed inwards or the two weight 1 edges must be directed inwards.

\item An \emph{OR vertex} is a vertex of degree 3, with minimum inflow 2 and with incident edge weights all equal to 2. It works like a logical OR since to satisfy its minimum inflow requirement, at least one of its edges must be directed inwards.
\end{itemize}

\decision{Restricted NCL}{A constraint graph $G$ built exclusively from AND and OR vertices, a legal initial configuration for $G$ and a given target edge $e$ from $G$.}{Is there a sequence of legal edge reversals that, starting from the specified initial configuration, reverses $e$?}

\begin{theorem}[Hearn and Demaine \cite{hearn02}]
\textsc{Restricted NCL} is \PSPACE-complete.
\end{theorem}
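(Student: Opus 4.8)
The plan is to prove the two directions separately. For membership in \PSPACE, observe that a configuration of a constraint graph $G=(V,E)$ is simply an orientation of its edges, encoded in $O(|E|)$ bits, and that both the legality of a single reversal and the test ``is the target edge $e$ oriented as desired'' are checkable in polynomial time. Hence the question is a reachability question in the (exponentially large, implicitly specified) directed graph whose nodes are legal configurations and whose arcs join configurations differing by one legal reversal. A nondeterministic machine solves this by guessing the reversal sequence one step at a time, storing only the current configuration and a step counter bounded by $2^{|E|}$, using polynomial space; Savitch's theorem then yields a deterministic polynomial-space algorithm.

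For \PSPACE-hardness I would reduce from \textsc{TQBF}: from a fully quantified CNF formula $Q_1 x_1 \cdots Q_n x_n\, \phi$, build in polynomial time a constraint graph all of whose vertices are AND or OR vertices (degree $3$, with the prescribed incident weights and minimum inflow $2$), a legal initial configuration, and a target edge $e$, so that $e$ can be legally reversed iff the formula is true. The construction is assembled from constant-size NCL gadgets: a \emph{wire} simulating a degree-$2$ pass-through (obtained by chaining AND/OR vertices together with auxiliary always-satisfied edges, since every vertex must have degree exactly $3$); a \emph{FANOUT} gadget (an AND vertex used ``backwards'') that copies one edge's orientation onto two; one \emph{clause} gadget per clause, built from OR vertices, whose output edge can point outward exactly when some literal input points ``true''; an \emph{AND-tree} over the clause outputs producing a single ``$\phi$ is satisfied'' edge; and, for each quantifier $Q_i$, a \emph{quantifier gadget} whose activation input comes from the $Q_{i-1}$ gadget and which drives the value of $x_i$ through FANOUTs to every occurrence of $x_i$ in $\phi$. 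An \emph{existential} gadget, once activated, is free to set $x_i$ either way; a \emph{universal} gadget is designed so that it can complete the reversal it is forced to perform only after $x_i$ has been set --- and the rest of the construction successfully driven --- once for each value of $x_i$. The target $e$ hangs behind the $Q_1$ gadget, so attempting to reverse it cascades down the whole quantifier chain.

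Correctness then follows by induction on $n$ (equivalently, on depth in the game tree of the formula): the configuration of $G$ can be legally driven so as to reverse $e$ with the subformula $Q_{i+1} x_{i+1} \cdots \phi$ ``pending'' iff that subformula holds under the partial assignment to $x_1,\dots,x_i$. The base case is the clause/AND-tree subgraph evaluating $\phi$; the $\exists$ step uses the freedom of the existential gadget to select a witnessing value; the $\forall$ step uses that the universal gadget's required reversal is unlocked only after both values of $x_i$ have each been propagated through the remainder of the construction. One also checks that every gadget uses only AND and OR vertices and exhibits a legal initial configuration corresponding to the all-false assignment with $e$ un-reversed. The main obstacle is the universal quantifier gadget: one must guarantee that the only route toward reversing $e$ genuinely forces the construction to be traversed once with $x_i$ true and once with $x_i$ false --- with a proper ``rewind'' of the downstream edges in between, and no shortcut admitting a false formula --- all while staying within the degree-$3$ AND/OR vocabulary. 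Pinning down the exact reachability correspondence for that gadget, together with the bookkeeping needed to realize higher-degree and pass-through behaviour out of degree-$3$ vertices, is where essentially all the difficulty lies.
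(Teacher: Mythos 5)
First, note that the paper does not prove this theorem at all: it is quoted from Hearn and Demaine, so the only meaningful comparison is with their original argument. Your membership argument is fine, and your hardness plan follows the same route Hearn and Demaine take: a polynomial-time reduction from quantified Boolean formulas that assembles an AND/OR constraint graph from wiring/fanout gadgets, clause gadgets, an AND-tree, and a chain of quantifier gadgets, with correctness proved by induction on the quantifier prefix. (Since the theorem as stated here does not require planarity, your general-graph reduction would indeed suffice for this paper's purposes.)

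As a self-contained proof, however, your proposal has a genuine gap, and you name it yourself: the universal quantifier gadget is never constructed. Essentially all of the content of the theorem is concentrated there. One must exhibit a concrete constant-size subgraph built only from AND and OR vertices and prove a reachability invariant of the form ``the gadget's output edge can be legally reversed if and only if the downstream part of the construction can be driven to its satisfied state under both settings of $x_i$,'' and this must hold against arbitrary interleavings of reversals, partially rewound intermediate configurations, and potential shortcut sequences --- in Hearn and Demaine's proof this is done with an explicit latch-style gadget and a careful case analysis of its legal configurations. Merely postulating a gadget with the desired behaviour amounts to assuming the crux of the theorem; the same (to a lesser degree) holds for your existential gadget and for the claim that degree-$3$ AND/OR vertices can realize the needed pass-through and fanout wiring, though those parts are comparatively routine and are handled essentially as you describe. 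To close the gap you would either have to design and verify the quantifier gadgets explicitly or cite Hearn and Demaine for the result, which is exactly what this paper does.
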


Note that \textsc{Restricted NCL} remains \PSPACE-complete when restricted to planar graphs, but we do not make use of this property in our proof. This is because the structure of the NCL graph will be encoded in the switch-trapdoor correspondence in the Bloxorz level, rather than in its physical structure.

\section{PSPACE-completeness of Standard Bloxorz}

\begin{theorem}
\textsc{Bloxorz} is \PSPACE-complete.
\label{theorem:standardbloxorz}
\end{theorem}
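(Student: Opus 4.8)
The proof plan is to reduce from \textsc{Restricted NCL}. Membership in \PSPACE\ is straightforward: a Bloxorz level has at most $2 \cdot (\text{number of tiles}) \cdot (\text{number of trapdoor states})$ reachable configurations, so the configuration graph is of exponential size; we can nondeterministically guess a path (a sequence of legal moves) from start to goal, checking legality step by step in polynomial space, and then invoke Savitch's theorem (or simply observe $\NPSPACE = \PSPACE$). So the real work is the hardness reduction.

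For hardness, the idea is to encode an NCL constraint graph $G$ directly into the switch-trapdoor correspondence, as hinted in the remark after \textsc{Restricted NCL}. I would build, for each edge $e$ of $G$, a small ``edge gadget'': a tile whose traversability is governed by a trapdoor, together with a switch that the block can press to flip that trapdoor --- the trapdoor's open/closed state represents the current orientation of $e$. The block roams a central ``hallway'' from which it can detour into any edge gadget to attempt an edge reversal. For each AND vertex and each OR vertex of $G$, I would build a ``vertex gadget'': a small region of tiles whose own local connectivity depends on the trapdoors of its three incident edges, arranged so that the block can physically enter and exit the gadget (and hence the edge-reversal detour is ``legal'') if and only if the corresponding NCL minimum-inflow constraint would remain satisfied. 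The crucial device is that one switch controls exactly one trapdoor but, because the $1\times1\times2$ block spans two cells, a single move can press two switches at once; to make one ``edge-reversal action'' atomically flip the orientation as seen by both endpoint vertices, I would either duplicate the trapdoor state (two copies of each edge's trapdoor, flipped together by a length-2 move straddling two switches) or route the hallway so the block must commit to a consistent sequence. The target edge $e$ of the NCL instance is translated into: the block can reach the Bloxorz goal square if and only if $e$ has been reversed, e.g. by placing the goal behind a trapdoor that is closed exactly in $e$'s reversed orientation.

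The key steps, in order, are: (1) state and dispatch \PSPACE-membership via the configuration-graph/Savitch argument; (2) describe the global layout --- a central hallway plus one gadget per edge and one per vertex --- and how orientations of edges correspond to trapdoor states; (3) build the edge gadget, showing that the block can always toggle an edge's trapdoor, subject only to the vertex gadgets at its two endpoints permitting it; (4) build the AND and OR vertex gadgets, proving the local equivalence ``block can traverse the vertex region $\iff$ the NCL inflow constraint holds for that vertex'' for every combination of incident edge orientations; (5) handle the doubling/atomicity issue so that a legal move sequence in Bloxorz corresponds exactly to a legal edge-reversal sequence in $G$, and vice versa; (6) encode the target edge as reachability of the goal square and conclude both directions of the reduction; (7) check the reduction is polynomial-time (each vertex/edge contributes $O(1)$ tiles).

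The main obstacle I anticipate is step (4)--(5): designing vertex gadgets out of plain tiles, $1\times1$ trapdoors, and switches --- with the severe restriction that each switch toggles (rather than sets) exactly one trapdoor --- so that the block's ability to pass through a vertex gadget faithfully mimics an AND or OR inflow constraint, while simultaneously ensuring that the block cannot ``cheat'' by leaving a gadget in an inconsistent intermediate state, toppling off the edge, or pressing switches in unintended combinations because of its elongated shape. Getting the geometry right so that every physically legal move sequence projects to a legal NCL reversal sequence (and no spurious shortcuts exist) is the delicate part; the toggling-versus-setting mismatch in particular is why a naive use of Viglietta's door/button framework does not immediately apply and must be engineered around.
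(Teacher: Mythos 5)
Your high-level plan coincides with the paper's: membership via a nondeterministic search of the configuration space plus Savitch's theorem, and hardness by reduction from \textsc{Restricted NCL} with the constraint graph encoded in the switch--trapdoor correspondence, edge gadgets whose traversal realizes edge reversals, vertex gadgets that enforce the inflow constraints, and the goal square placed behind the target edge. However, as a proof the proposal has a genuine gap, and it is exactly the one you flag yourself in steps (4)--(5): the gadgets are never constructed, and for this kind of hardness result the gadgets \emph{are} the proof. In particular, your sketch of storing an edge's orientation in one trapdoor (or a straddled pair) is not enough: each vertex must be realized as three separate copies of a vertex gadget (one per incident edge), each containing a trapdoor for every incident edge, so a single edge's orientation must be consistently reflected in six trapdoors scattered across vertex gadgets, even though every switch toggles exactly one trapdoor. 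The paper's mechanism for this is the specific edge-gadget corridor: internal trapdoors labelled $A$ and $B$ guarding the two exits, with each internal switch physically paired (under the $1\times1\times2$ block) with a switch for a vertex-gadget trapdoor, so that leaving the gadget at either end forces all toggles to have been performed consistently, and an explicit argument that a partial toggle-and-retreat can only \emph{remove} the edge's weight from both endpoints (never grant it to both), which is never beneficial and never violates a constraint.

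A second point your plan gets slightly wrong, and which simplifies the design, is where the constraint is checked: you propose that the detour be legal iff the constraints at \emph{both} endpoints remain satisfied, but only the vertex that loses inflow can become violated, so the paper checks only there --- the block must exit the edge gadget through the vertex gadget of the inflow-losing endpoint, and that gadget's geometry (OR: exit possible iff at least one of the three trapdoors is closed; AND: iff the weight-2 trapdoor is closed or both weight-1 trapdoors are closed) is precisely the inflow test. Until you exhibit concrete tile layouts with these properties and prove the ``no cheating'' direction (every legal Bloxorz move sequence projects to a legal reversal sequence), the reduction is a plausible program rather than a proof; the membership argument and the overall architecture, on the other hand, are fine and match the paper.
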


\begin{proof}
By reduction from \textsc{Restricted NCL}. We first establish \PSPACE-hardness. We first describe the various constructions used to represent elements from a constraint graph, then show how to put them together in to a \textsc{Bloxorz} level. Finally, we show that \textsc{Bloxorz} is in \PSPACE.
\end{proof}

\subsection{Overview}

Figure \ref{fig:construction} shows a simplified overview of our construction. The white area represents a section of the game level where the block can move freely. The grey boxes represent copies of the vertex and edge gadgets, which will be described in detail later. The start and goal squares are marked with $S$ and $G$ respectively. 

\begin{figure}[th]
\centering
\includegraphics[scale=0.8]{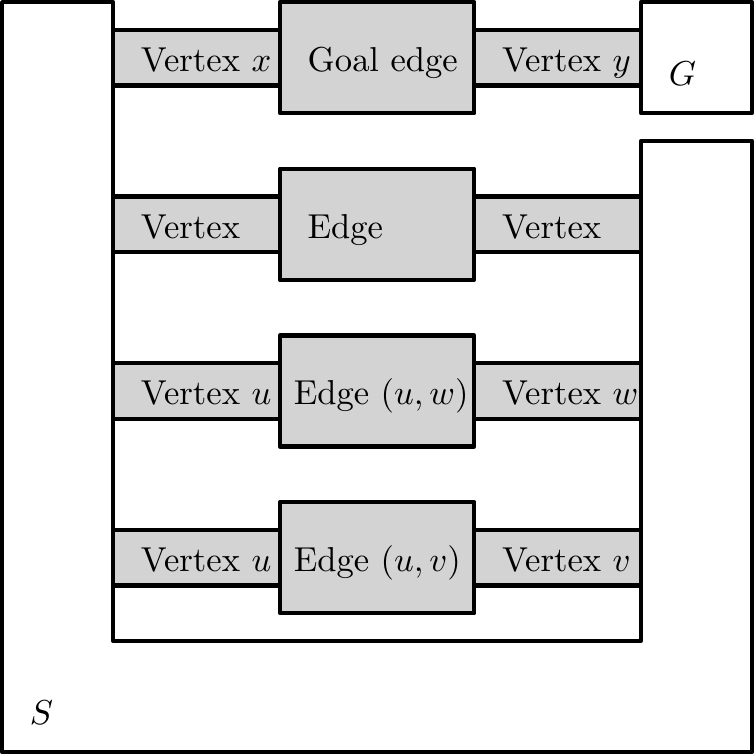}
\caption{Overview of the construction}
\label{fig:construction}
\end{figure}

A solution to the Bloxorz level corresponds to a solution to the NCL problem as follows: the player will cross the gadgets from left to right and from right to left a number of times to solve the level. Crossing the edge gadget $(u,v)$ from left to right (towards vertex gadget $v$) corresponds to changing the orientation of the edge $(u,v)$ to be towards $u$. To cross the edge gadget, the player will need to activate a number of switches in the edge gadget which changes the state of the trapdoors in the vertex gadgets.

The vertex gadgets will be designed so that it is possible to leave the edge via them only if their inflow constraint is satisfied. So, if the player crosses the edge $(u,v)$ from left to right, orienting the edge towards $u$, the vertex $v$ loses inflow while $u$ gains inflow. Since gaining inflow can not cause a constraint to be violated, we only need to check the constraint at vertex $v$. This checking is done in the vertex gadget.

Note that the input graph to the \textsc{restricted NCL} problem only has degree 3 vertices. For every vertex in the input graph, 3 copies of the vertex gadget are created - one for each of the incident edges. These are related by switch-trapdoor correspondence: in our example, crossing the edge $(u,v)$ would also affect the state of the trapdoors in the gadget for vertex $u$ attached to edge $(u,w)$.

In \textsc{NCL}, the goal is to eventually reverse a given goal edge. This is encoded in our Bloxorz level by the placement of the goal square: in order to reach it the player has to cross the edge gadget of the goal edge which corresponds to reversing its orientation.

\subsection{Edge gadget}
Figure \ref{fig:edge_gadget} shows the construction used to represent an edge from an NCL graph. The gray squares represent normal tiles, the squares with fat borders represent trapdoors. Circles represent switches, and the lettering shows switch-trapdoor correspondence. Trapdoors and switches with white infill correspond to open trapdoors and conversely, grey switches and trapdoors correspond to closed trapdoors. The switches labelled with $u,v$ correspond to trapdoors that are inside vertex gadgets, which will be introduced later.

\begin{figure}[th]
\centering
\includegraphics[scale=1]{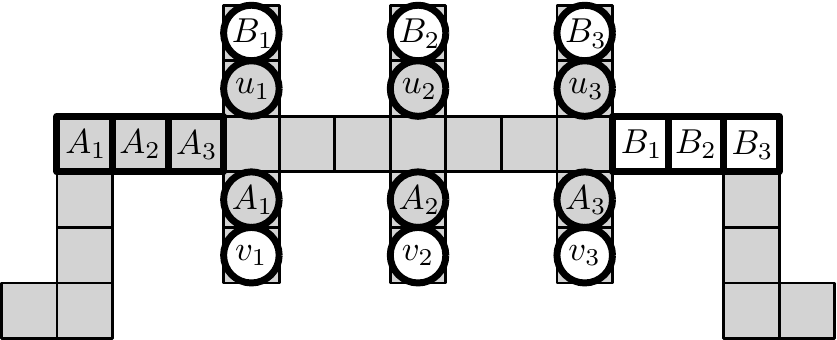}
\caption{The edge gadget}
\label{fig:edge_gadget}
\end{figure}

On either end, the edge gadget will be attached to a corner gadget. The edge gadget is shown pointing to the right vertex, i.e. its weight is counting towards the right vertex.

To reorient the edge gadget, the player would enter it from the left and travel over the trapdoors labelled $A$. They would then trigger all the switches exactly once, which leaves the trapdoors labelled $A$ open while closing the trapdoors labelled $B$. The block can then leave the edge via the right side, travelling over the trapdoors labelled $B$. Note that due to the 1 by 1 by 2 shape of the block it is only possible for the switches to be triggered in pairs (e.g. $B_1$ is always triggered at the same time as $u_1$) so this leaves the trapdoors labelled $v$ closed and the ones labelled $u$ open. The edge gadget is now in the opposite orientation (left) and to reverse it again, one would make the same moves but starting from the right.

Note that when we reorient the edge from right to left, the right vertex loses weight while the left vertex gains weight. This means that the only place a constraint could possibly have been violated is the right vertex. Since the trapdoors labelled $A$ are now open, we have to leave via the right side by travelling over the trapdoors labelled $B$. The vertex gadget will be constructed in such a way that it enforces the constraint: the vertex gadget will prevent us from leaving the edge gadget if its constraint is not satisfied.

Above, we described a ``canonical'' movement of the block over the edge. Clearly, the player can decide to move the block over the edge gadget in many different ways. For example, they could enter it from the left, activate all the switches on the top side, closing the trapdoors labelled $B$ and opening those labelled $u$. They could then leave the edge again from the left side, and the edge's weight is essentially removed from the graph: it no longer gives its weight to either vertex. And even more cause for concern: we did this without checking that the constraint of the right vertex was not violated!

However it is never beneficial to only complete a reorientation partially. If at any point we want the weight of the edge to start counting towards the left vertex we have to open (at least one of) the trapdoors labelled $A$ which forces us to exit via the right side and check the constraint for the right vertex. Also, note that the edge will never give its weight to more than one vertex since that would mean both a trapdoor labelled $A$ is open and one labelled $B$ is open which makes it impossible to leave the edge.

\begin{claim}
Upon leaving the edge gadget, it is not possible for a trapdoor labelled $u$ to be closed while a trapdoor labelled $v$ is also closed.
\end{claim}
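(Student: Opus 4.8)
The plan is to establish a global invariant that ties the state of each edge-gadget trapdoor to the state of its partner trapdoor in a vertex gadget, and then to observe that actually leaving the edge gadget forces an entire ``side'' of these trapdoors to be closed. First I would argue, from the geometry of Figure~\ref{fig:edge_gadget}, that because the block occupies two adjacent cells the switches can only ever be activated in the pairs indicated by the labelling: every legal move that toggles switch $B_i$ simultaneously toggles switch $u_i$, no legal move toggles exactly one of the two, and likewise for each pair $A_i, v_i$. Hence at every moment of the game the number of times $B_i$ has been pressed equals the number of times $u_i$ has been pressed, and similarly for $A_i$ versus $v_i$.

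Since $B_i$ is the unique switch controlling trapdoor $B_i$ and $u_i$ the unique switch controlling trapdoor $u_i$, the previous step implies that the joint state of $B_i$ and $u_i$ is determined by their common press-parity together with the initial states; reading the initial states off the figure, this yields the invariant ``$B_i$ is closed exactly when $u_i$ is open'' and, symmetrically, ``$A_i$ is closed exactly when $v_i$ is open'', and these hold at all times. Next I would check, again from the figure, that there are exactly two ways out of the edge gadget — through the left corner gadget or through the right one — that leaving on the left requires the block to pass over every trapdoor labelled $A$, and that leaving on the right requires it to pass over every trapdoor labelled $B$; in particular, at the instant the block leaves, either all $A$-trapdoors are closed or all $B$-trapdoors are closed.

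Combining the two: if the block leaves on the left then every $v$-trapdoor is open, and if it leaves on the right then every $u$-trapdoor is open, so in neither case can a $u$-trapdoor and a $v$-trapdoor both be closed, which is exactly the claim. (If the block never leaves the edge gadget the statement is vacuous, and since the $u$- and $v$-trapdoors are controlled only by switches inside this gadget, nothing outside can disturb the invariant between successive visits.)

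The main obstacle is the geometric case analysis underlying the two structural facts above: one must verify rigorously that no sequence of moves, however convoluted, can toggle a single switch of a pair, and that there is no escape route from the edge gadget — in particular no ``partial'' traversal of the sort hinted at in the discussion preceding the claim — that avoids some $A$- or $B$-trapdoor. Once those facts about Figure~\ref{fig:edge_gadget} are pinned down, the parity invariant and the conclusion follow immediately.
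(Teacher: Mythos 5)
Your proposal is correct and follows essentially the same route as the paper: the paper's proof likewise rests on the observation that leaving the gadget requires all $A$-trapdoors or all $B$-trapdoors to be closed, combined with the paired-switch linkage (the $B_i$/$u_i$ and $A_i$/$v_i$ switches being toggled simultaneously by the $1\times1\times2$ block) that the paper establishes in the prose preceding the claim. You merely make that parity invariant explicit, which matches the paper's intent and level of figure-based rigor.
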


\begin{proof}
Leaving the edge gadget requires all the trapdoors $A$ to be closed or all the trapdoors labelled $B$ to be closed, which in turn enforces the condition of the lemma. This enables us to define the orientation of the edge gadget: the edge gadget is oriented away from the vertex $u$ if the trapdoors labelled $u$ are open and oriented away from $v$ if all the trapdoors labelled $v$ are open. If both trapdoors $u$ and $v$ are open then we may consider the edge gadget oriented either way.
\qed
\end{proof}

\subsection{Vertices}

A vertex is represented by three separate, identical gadgets, that attach to the edge gadgets corresponding to its incident edges. The edge has three switches corresponding to trapdoors in each of the three copies of the vertex gadget. Every vertex has three trapdoors, and every trapdoor is associated with a distinct edge connecting to the vertex.

\subsubsection{OR vertex gadget}

Figure \ref{fig:orv} shows one of the three identical components that together form one OR vertex gadget. It is drawn with part of an edge gadget attached. The trapdoors labelled $x,y$ and $v_1$ each correspond to one of the three edges incident to the vertex.
The following claim is self-evident. Note that it enforces the minimum inflow constraint: if one of the trapdoors is closed then the edge associated with that trapdoor must be pointing towards this vertex.

\begin{claim}
It is possible for the block to leave the edge via the OR vertex gadget if and only if at least one of the trapdoors $x, y$ or $v_1$ is closed.
\end{claim}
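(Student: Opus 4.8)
The plan is to prove the claim by a direct inspection of the block configurations reachable inside the OR vertex gadget of Figure~\ref{fig:orv}, with the three trapdoor states held fixed: the gadget itself contains no switches, so moving through it toggles nothing, and the statement is purely about geometric reachability on the grid for the $1\times1\times2$ block. The structural fact I would extract from the figure is that the three trapdoor squares labelled $x$, $y$ and $v_1$ sit in three otherwise-normal one-tile-wide corridors, that these corridors are the only connections between the square where the block enters from the attached edge gadget and the square through which it would exit toward the free area, and that the edge gadget attaches strictly on the ``entry'' side. In graph terms, $\{x,y,v_1\}$ is a cut separating entry from exit in the tile graph underlying the gadget (together with the attached part of the edge gadget).

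For the ``only if'' direction I would argue the contrapositive. If $x$, $y$ and $v_1$ are all open, then since the block must be fully supported after every move it can never lie even partially on any of these three squares; because the corridors are only one tile wide, a lying block straddling such a square is also unsupported. Hence the set of block positions reachable from the entry square stays within one connected component of ``gadget minus $\{x,y,v_1\}$'', which by the cut property does not contain the exit square, so the block cannot leave the edge via this gadget.

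For the ``if'' direction, suppose some trapdoor, say $x$, is closed. A closed trapdoor behaves exactly like a normal tile, so the $x$-corridor is fully supported end to end, and I would simply exhibit the canonical sequence of tilting and rolling moves that walks the block from the entry square along the $x$-corridor to the exit square, verifying support at each step; the cases where $y$ or $v_1$ is closed are handled by the analogous walk through that trapdoor's corridor. Combining the two directions gives the claim.

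The main obstacle is not any isolated hard step but the discipline of respecting the block's $1\times1\times2$ shape throughout: one must confirm from the figure that (a) $\{x,y,v_1\}$ genuinely separates entry from exit for the \emph{block} and not merely for a point-sized avatar --- which the one-tile corridor widths guarantee --- and (b) the block cannot detour back into the edge gadget and re-emerge on the exit side of the cut, which follows from where the edge gadget is attached. Once these are checked, the case analysis is routine bookkeeping, which is why the claim may fairly be called self-evident.
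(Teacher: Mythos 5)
Your argument is correct and is essentially the same reasoning the paper relies on: the paper offers no explicit proof (the claim is declared self-evident from Figure~\ref{fig:orv}), and your cut-separation plus canonical-walk argument is just a careful write-up of that inspection, using the facts that the vertex gadget contains no switches and that an open trapdoor forbids any part of the block from resting on it. The only thing to verify against the figure is the exact corridor geometry you assume, but the structural point --- that $\{x,y,v_1\}$ separates the edge gadget from the exit and that any single closed trapdoor restores a fully supported path --- is precisely what makes the claim self-evident in the paper.
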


\begin{figure}[th]
     \centering
     \hfill
     \subfloat[][OR vertex] {
         \includegraphics[scale=1]{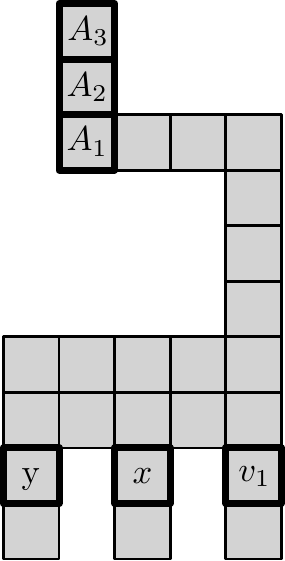}
         \label{fig:orv}
     }
     \hfill
     \subfloat[][AND vertex] {
         \includegraphics[scale=1]{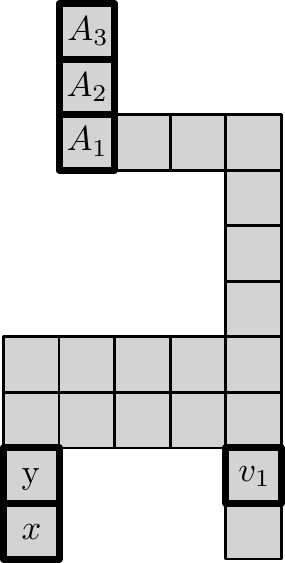}
         \label{fig:andv}
     }
     \hfill\null
     \caption{Vertex gadgets, showing the three identical components that make up: (a) an OR vertex, (b) an AND vertex. The vertex gadgets attach to edge gadgets at the top and to the rest of the game level at the bottom. Note that the trapdoors labelled with $A$ are part of the edge gadget.}
\end{figure}

\subsubsection{AND vertex gadget}

Figure \ref{fig:andv} shows one of the three identical components that together form one AND vertex. The following claim regarding the AND vertex gadget is self-evident:

\begin{claim}
It is only possible for the block to leave the edge via the AND vertex gadget if and only if the trapdoor labelled $v_1$ is closed or both the trapdoors labelled $x$ and $y$ are closed.
\end{claim}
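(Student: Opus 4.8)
The plan is to prove this claim exactly as the paper disposes of the analogous OR-vertex claim: by a direct, finite case analysis of the board in Figure~\ref{fig:andv}. The key preliminary observation is that the vertex gadget contains \emph{no switches} (only the three trapdoors $x$, $y$, $v_1$, together with the trapdoors $A$ belonging to the incident edge gadget), so moving the block through it never changes any trapdoor state; hence ``can the block leave the edge through this gadget'' is purely a reachability question on the static board determined by which of $x$, $y$, $v_1$ are currently closed. First I would fix the two interfaces of the gadget --- the squares at the top where the block arrives from the edge gadget, and the squares at the bottom that lead into the free-movement region --- so that the statement becomes: the top interface is connected to the bottom interface (by legal rolling/tilting moves of the $1\times1\times2$ block) iff $v_1$ is closed or both $x$ and $y$ are closed.

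Next I would enumerate the routes through the gadget. The gadget is designed so that there are essentially two route shapes from the top interface to the bottom: a direct channel that must pass over the square controlled by $v_1$, and a detour that forces the block to rest on the squares controlled by $x$ and by $y$ (either lying across both at once, or standing up on each in turn). For each shape I would check the support condition move-by-move and conclude that the direct channel is traversable precisely when $v_1$ is closed, while the detour is traversable precisely when $x$ \emph{and} $y$ are both closed --- a single one of the two never suffices, because the block always needs a two-tile footprint or an intermediate upright rest on the other square. Taking the union of the two channels gives the ``$\Leftarrow$'' direction, and the fact that every legal traversal uses one of these two shapes gives ``$\Rightarrow$''. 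Finally I would note how this matches the reduction: a closed trapdoor encodes ``the associated incident edge points into this vertex'', so ``the block can leave via the AND vertex gadget'' is exactly the AND-vertex inflow constraint (inflow $2$ from the weight-$2$ edge, or $1+1$ from the two weight-$1$ edges).

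The step that actually requires care --- and the reason ``self-evident'' is slightly generous --- is establishing exhaustiveness of the case analysis: I must rule out any exotic route that exploits rolling the block around inside the gadget, reversing direction, or dipping partway back into the edge gadget and returning, so as to reach the bottom interface while bypassing the $v_1$ square and at least one of $x$, $y$. I would handle this structurally from Figure~\ref{fig:andv}: every tile of the gadget other than $x$, $y$, $v_1$ is either a dead end or lies on one of the two identified channels, and the missing tiles (walls) prevent the block from ever carrying its footprint off those channels; so the reachable set of board positions is completely pinned down by which of the three trapdoors is closed, and no further routes exist. With that observation the claim follows, and the proof mirrors, step for step, the verification already carried out for the OR vertex.
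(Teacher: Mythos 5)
The paper offers no argument for this claim at all---it is simply declared ``self-evident'' from Figure~\ref{fig:andv}---and your case analysis is precisely the figure-inspection that declaration implicitly appeals to: the vertex gadget contains no switches, so traversal is a static reachability question, and the only routes to the bottom interface pass either over the $v_1$ square or over both the $x$ and $y$ squares, with exhaustiveness read off from the gadget's geometry. Your proposal is therefore correct and takes essentially the same (implicit) approach as the paper, just spelling out the details the authors leave to the reader.
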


Thus, the  edge associated with trapdoor $v_1$ functions as the weight-2 edge and the edges associated with trapdoors $x$ and $y$ function as weight-1 edges. The vertex gadget enforces that the minimum inflow constraint is satisfied.

\subsection{Final details}

We will now show how to assemble the component gadgets, forming a \textsc{Bloxorz} instance.

For every edge in the constraint graph, one copy of the edge gadget is created. For every vertex, three copies of the appropriate vertex gadget are created and attached to their respective edges. Then, we add some additional blocks so that every vertex can be reached from every other vertex, with the sole exception of the vertex in to which the target edge is pointing. it is not attached to any other vertex, instead, it is attached to the goal square. We can place the start square anywhere on the level, except on those squares from which the goal square can be reached directly.

\begin{claim}
An instance of \textsc{Bloxorz} constructed in this way from a \textsc{Restricted NCL} instance has a solution
if and only if there exists a sequence of edge reversals that reverses the target edge in the \textsc{NCL} instance.
\end{claim}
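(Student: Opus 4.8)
The plan is to prove the two directions of the equivalence separately, establishing a correspondence between configurations of the NCL graph and "stable" states of the Bloxorz level. First I would make precise what state of the Bloxorz level corresponds to an NCL configuration: the block sits upright on some square from which it can freely move (the white area), and every edge gadget is in one of its two canonical orientations (all $A$'s closed or all $B$'s closed), so that by the Claim of the edge-gadget section each edge gadget has a well-defined orientation. By construction of the switch--trapdoor correspondence, the orientation of an edge gadget $(u,v)$ determines, and is determined by, the states of the trapdoors labelled $u$ and $v$ in the relevant copies of the vertex gadgets for $u$ and $v$; I would check that the legal-configuration condition of NCL (minimum inflow satisfied at every vertex) corresponds exactly to the statement that, in every vertex gadget component, at least one (OR) resp. the appropriate (AND) trapdoor is closed, which by the two vertex-gadget claims is precisely the condition under which the block could leave an edge gadget through that vertex gadget. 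Since the initial NCL configuration is legal, this gives the starting Bloxorz state; since the block starts on a square from which the goal is not directly reachable, the only way to reach the goal is through the edge gadget of the target edge.

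For the forward direction (NCL solvable $\Rightarrow$ Bloxorz solvable) I would take a legal sequence of edge reversals and simulate it move-by-move: to reverse edge $(u,v)$ from pointing-at-$v$ to pointing-at-$u$, the player enters the corresponding edge gadget from the $v$-side, rolls over the $A$ trapdoors, triggers every switch in the gadget exactly once (in pairs, as forced by the $1\times1\times2$ shape), and exits via the $u$-side over the $B$ trapdoors. The exit is possible because the reversal was legal in NCL: the vertex $u$ gains inflow (so its constraint, and the constraints of all other vertices, remain satisfied) and the constraint at the destination vertex $u$ being satisfied is, by the vertex-gadget claim, exactly what licenses leaving the gadget there. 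After the crossing the level is again in a state corresponding to the new NCL configuration, so we can iterate; the last reversal is the target edge, whose gadget the player crosses to reach the goal square.

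For the reverse direction (Bloxorz solvable $\Rightarrow$ NCL solvable) the main obstacle is that the player need not play "canonically": they can enter an edge gadget, flip some switches, and leave from the same side, possibly leaving the gadget in an intermediate state, and they can interleave partial manipulations of many gadgets. I would handle this by the argument already sketched informally in the edge-gadget discussion, made rigorous: at any moment define the orientation of an edge gadget to be "towards $v$" if not all of its $u$-trapdoors are open, and "towards $u$" if not all of its $v$-trapdoors are open (by the Claim these are the only possibilities when the block is outside the gadget, and when both sets are open we may pick either). Whenever the block enters a gadget and later exits it, I would argue the net effect on the trapdoor states is either nothing, or a legal NCL reversal of that edge: exiting forces all $A$'s closed or all $B$'s closed, and exiting through a vertex gadget forces that vertex's inflow constraint to hold in the post-crossing state, so any change of orientation that actually occurs is a legal reversal; manipulations that do not change any gadget's orientation can be deleted from the trace without affecting reachability of the goal. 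Concatenating the legal reversals extracted in this way from the winning play, and observing that reaching the goal requires crossing the target edge's gadget (hence reversing the target edge at least once), yields the desired sequence of legal NCL edge reversals. I would close by noting membership in \PSPACE\ is handled in the separate part of the proof (configuration graph has size exponential in the number of tiles, reachability by \PSPACE\ $=$ \NPSPACE), so hardness plus membership gives \PSPACE-completeness.
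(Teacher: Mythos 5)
Your overall strategy is the same as the paper's (the paper's own proof is essentially a terse two-paragraph version of your argument: simulate legal reversals by canonical crossings, and extract legal reversals from an arbitrary winning play using the edge-gadget and vertex-gadget claims), and your treatment of non-canonical play in the reverse direction is a welcome elaboration of what the paper only sketches in the edge-gadget discussion.

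However, your forward direction gets the crucial correspondence backwards, and with it the reason the simulation works. In the construction, exiting an edge gadget on the side of a vertex $w$ is only possible after the switch pairs have been toggled so that the trapdoors labelled $w$ (in $w$'s vertex-gadget copies) are \emph{open}; hence leaving via $w$'s vertex gadget corresponds to reorienting the edge \emph{away from} $w$, and the gadget's exit condition checks precisely that $w$ --- the vertex that \emph{loses} the edge --- still meets its minimum inflow using its other incident edges. You instead have the player, when reversing $(u,v)$ to point at $u$, exit via $u$'s side and justify the exit by the fact that $u$ \emph{gains} inflow. That justification is vacuous: the gaining vertex's constraint is automatically satisfied, so a gadget that checked it would enforce nothing, and the legality of the \textsc{NCL} reversal (satisfaction at the losing vertex $v$) --- which is exactly the hypothesis you must use to show the crossing is possible --- is never invoked. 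Taken literally, your description is also inconsistent with your own reverse direction, whose key sentence (``exiting through a vertex gadget forces that vertex's inflow constraint to hold in the post-crossing state, so any change of orientation is a legal reversal'') is correct only because the exit vertex is the one whose trapdoor for this edge has just been opened, i.e.\ the losing vertex. The fix is a relabelling (exit via the losing vertex's gadget; its constraint holds because the reversal is legal), but as written the forward direction does not establish that legal reversals can be simulated, and the stated gadget behaviour would make the reduction unsound.
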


\begin{proof}
A solution for the \textsc{NCL} instance, consisting of a sequence of edge reversals, translates directly to a solution for the Bloxorz instance by traversing the edge gadgets in the same order as the edges are reversed.

A solution that solves the \textsc{Bloxorz} instance can be translated in to a solution for the \textsc{NCL} instance as follows: every time we traverse an edge gadget associated with edge $(u,v)$ and leave via the vertex gadget $u$ corresponds to reorienting edge $(u,v)$ away from $u$. Since such a reversal only increases the weight of $v$, its minimum inflow must remain satisfied. The inflow constraint for $u$ must remain satisfied since this is enforced by the vertex gadget.
\qed
\end{proof}

\begin{claim}
\textsc{Bloxorz} is in \PSPACE.
\end{claim}

\begin{proof}
Since each state can be represented in polynomial space and we can generate the successor states efficiently, 
a nondeterministic search of the state space shows that Bloxorz is in \NPSPACE.
By Savitch`s theorem\cite{sav70}, \NPSPACE = \PSPACE, so Bloxorz is in \PSPACE.
\qed
\end{proof}

Since we have shown that \textsc{Bloxorz} is both contained in $\PSPACE$ and \PSPACE-hard, \textsc{Bloxorz} is \PSPACE-complete, and thus have shown Theorem~\ref{theorem:standardbloxorz}.
\qed

Our proof requires that we are able to choose the initial state of each trapdoor (open or closed) when specifying the problem instance. The following theorems show this is not essential to the hardness:

\begin{theorem}
\textsc{Bloxorz} remains \PSPACE-complete even when limited to the instances where all trapdoors are initially open.
\end{theorem}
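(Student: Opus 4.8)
The plan is to reuse the construction from the proof of Theorem~\ref{theorem:standardbloxorz} essentially unchanged, but to (i) declare every trapdoor initially open and (ii) prepend a forced \emph{initialization corridor} whose only effect is to move the level into the trapdoor configuration that the Theorem~\ref{theorem:standardbloxorz} instance started from. Such a preamble is genuinely necessary: in any legal \textsc{Restricted NCL} configuration every vertex has at least one incident edge pointing into it, and that situation is encoded by closed trapdoors (both in the edge gadget and in the vertex gadget), so no all-open initial configuration can by itself encode a legal NCL configuration.

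The corridor works because a switch square is an ordinary traversable tile that merely happens to toggle a trapdoor; hence the block can reach the switches sitting inside the edge gadgets even while those gadgets are ``sealed'' (both their $A$- and $B$-trapdoors open). I would add a new start square $S'$ whose only exit is a path of normal tiles ending at the old start square $S$. For each edge whose initial NCL orientation points to its right vertex $v$, route this path over that edge gadget's $(A_1,v_1)$ and $(A_2,v_2)$ switch pairs; one pass triggers each pair once, moving the gadget from all-open to ``$A$ closed, $B$ open, $v$-trapdoors closed, $u$-trapdoors open'', i.e.\ exactly the ``pointing to $v$'' state, and it simultaneously closes $v$'s vertex-gadget trapdoor for this edge, consistently. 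Edges initially pointing left are handled symmetrically via their $(B_i,u_i)$ pairs. After one canonical pass through the corridor the block is at $S$ with the entire level in precisely the starting state of the Theorem~\ref{theorem:standardbloxorz} instance, and the \PSPACE\ membership argument carries over verbatim (it never used the initial states).

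Correctness in one direction is immediate: from a solution of the \textsc{NCL} instance, make the canonical pass and then replay the Theorem~\ref{theorem:standardbloxorz} solution. For the converse one must show that a solution of the new level may be assumed to behave canonically. Every solution must enter the corridor, since it is the only way out of $S'$. The difficulty is that the corridor cannot be made a genuine one-way passage (the block can always retrace its steps along it, undoing any switch toggles), so the player might later re-enter the corridor from $S$ and re-trigger its switch pairs; but this is never helpful. Re-triggering a pair that the corridor used on an edge $e$ re-opens one of $e$'s currently-closed $A$-trapdoors (or both, if done symmetrically), leaving $e$ with an open $A$-trapdoor and an open $B$-trapdoor and therefore unenterable from the main part of the level, and it re-opens a vertex-gadget trapdoor, which only decreases a vertex's inflow. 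Moreover the corridor never touches the \emph{other} switch-pair family of any edge, so it provides no way to re-orient an edge while bypassing the vertex-constraint check. Hence re-entering the corridor can only seal edge gadgets and tighten vertex constraints, and any solution can be rewritten as one that passes through the corridor exactly once and then never returns---which is exactly a solution of the Theorem~\ref{theorem:standardbloxorz} instance, hence a solution of the \textsc{NCL} instance.

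The main obstacle I expect is making this last step fully rigorous: one has to rule out that some combination of a partial corridor pass, a later re-entry, and play in the resulting degenerate edge and vertex gadgets reaches the goal even when the underlying \textsc{NCL} instance is unsolvable. This is the same style of reasoning as the ``it is never beneficial to only complete a reorientation partially'' paragraph in the edge-gadget analysis, now extended to the new degenerate states a misused corridor can create; the facts to lean on are that a sealed edge gadget can never be re-opened from the main part of the level, that the corridor exposes only one of the two switch-pair families of each edge, and that the set of passages the block can actually use in any reachable configuration of the new level is contained in the set available in some reachable configuration of the Theorem~\ref{theorem:standardbloxorz} instance.
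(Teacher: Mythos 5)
Your overall strategy is the same as the paper's (prepend a forced initialization section that presses the \emph{existing} switches so as to close the trapdoors that should start closed, then hand over to the Theorem~\ref{theorem:standardbloxorz} level), but you are missing the one idea the paper's proof actually rests on. In the paper, each forced press is gated by an auxiliary, initially \emph{open} trapdoor $x$ with its own switch (Figure~\ref{fig:eopen}), arranged so that (i) the block cannot reach the main level without having toggled the target switch an odd number of times, and (ii) any later excursion from the main level back into the initialization area leaves that parity unchanged, because to re-exit the block must again cross $x$, which pins the parity. Your bare corridor has neither property. The lack of forcing in the forward direction is only a nuisance (a wiggling player can cross a corridor tile with an even number of activations, but under-initializing mostly hurts them); the fatal problem is soundness under re-entry, precisely the step you flag but do not supply.

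Concretely, your corridor gives the player permanent, trapdoor-free access from $S$ to the $(A_i,v_i)$ switch squares inside each edge gadget, i.e.\ access that bypasses the $A$/$B$ trapdoors, which are the only mechanism enforcing the paper's key claim that a $u$-trapdoor and a $v$-trapdoor of the same edge are never simultaneously closed. Your argument that re-triggering ``only re-opens trapdoors and tightens constraints'' silently assumes the edge is still in the state the corridor left it in. Once the edge has been legally reversed during main play (so $A$ and the $v$-trapdoors are open, $B$ and the $u$-trapdoors closed), the player can walk back to $S$, roll up the corridor, toggle $(A_1,v_1)$ and $(A_2,v_2)$ once each (turning around on a switch square is always possible in a one-wide corridor, so parity cannot be controlled), and return to the main level with $u_1,u_2,v_1,v_2$ all closed: the edge now supplies inflow to both endpoints at once, vertex gadgets can be satisfied spuriously, and an unsolvable \textsc{NCL} instance may yield a solvable level. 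This is exactly where your claimed domination (``every reachable configuration is contained in one reachable in the original instance'') fails, so the construction as described is not merely incompletely analysed but unsound. The repair is the paper's gadget: do not expose the switches through an open corridor, but place an initially open trapdoor behind each forced press so that the parity of the corresponding switch is fixed whenever the block is on the main-level side.
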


\begin{proof}
Figure \ref{fig:eopen} illustrates how we modify the construction to
force the player to close a specific trapdoor. In order to get from the start to the goal, the user must pass the trapdoor labelled $x$, but this forces them to also activate the switch $u_1$. We thus modify the level such that the player must first travel through such a gadget for every trapdoor that should initially be closed before they can access the rest of the level. For every edge gadget, we need to duplicate this construction six times.
\qed
\end{proof}

\begin{figure}[h]
     \centering
     \hfill
     \subfloat[][Force closing] {
         \includegraphics[scale=1]{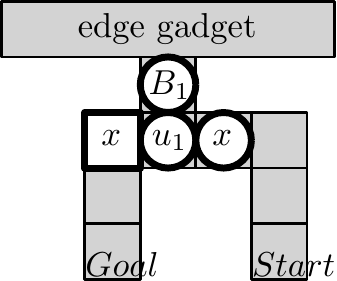}
         \label{fig:eopen}
     }
     \hfill
     \subfloat[][Force opening] {
         \includegraphics[scale=1]{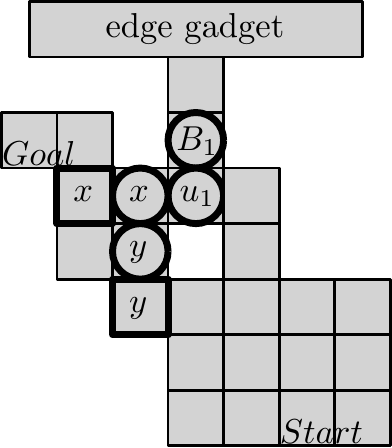}
         \label{fig:eclosed}
     }
     \hfill\null
     \caption{Constructions that show how we can: (a) force the player to close a specific trapdoor, (b) force the player to open a specific trapdoor. $B_1$ and $u_1$ correspond to switches that are part of an edge gadget.}
\end{figure}

\begin{theorem}
\textsc{Bloxorz} remains \PSPACE-complete even when limited to the instances where all trapdoors are initially closed.
\end{theorem}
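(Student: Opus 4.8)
The plan is to mirror the proof of the preceding theorem almost verbatim, exploiting the symmetry between opening and closing trapdoors. I would start from the construction of Theorem~\ref{theorem:standardbloxorz}, in which each trapdoor has a prescribed initial state (open or closed). Now I instead declare \emph{every} trapdoor to be initially closed, and prepend a one-way ``setup corridor'' that forces the player to bring into the open state each trapdoor that the original reduction wanted open, before any of the rest of the level becomes reachable.

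The building block of this corridor is the ``force opening'' gadget of Figure~\ref{fig:eclosed}: to move the block past it, one must cross the trapdoor labelled $B_1$, which is now initially closed, so the only way through is to first roll the block over the switch $u_1$, toggling $B_1$ open. Because switches are triggered in pairs, this simultaneously sets the partner trapdoor to the state it should have initially, exactly as in the proof for ``all trapdoors initially open''. Chaining one such gadget in series for every trapdoor that should be open, and placing the start square before the whole chain, guarantees that any solution must pass through all of them; afterwards the trapdoor configuration is precisely the initial configuration demanded by the reduction of Theorem~\ref{theorem:standardbloxorz}. As there, six copies of the forcing gadget per edge gadget are needed. From that point the correctness argument of Theorem~\ref{theorem:standardbloxorz} carries over unchanged, and membership in \PSPACE\ is unaffected.

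The one point requiring care --- and the only real obstacle --- is ruling out ``abuse'' of the setup corridor: the player must not be able to re-enter a forcing gadget later to flip a trapdoor back, nor to exploit its switches to gain extra slack in the NCL simulation. I would handle this by making the corridor strictly one-directional (each forcing gadget is passable only in the intended direction, and once past it the block cannot return), and by the same parity observation used for the edge gadget --- each setup switch is paired with a trapdoor whose only other occurrence is in its own edge gadget, so triggering it an even number of extra times is a no-op and an odd number of extra times is never beneficial. With this in place the reduction, and hence the theorem, follows.
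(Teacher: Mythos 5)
Your high-level plan (a mandatory setup section traversed before the rest of the level, one forcing gadget per trapdoor that should start open, six per edge gadget, then invoking the reduction of Theorem~\ref{theorem:standardbloxorz}) is the same as the paper's. But the mechanism you give for the forcing gadget is backwards, and this is exactly the point where the "symmetry between opening and closing" you rely on breaks down. In Bloxorz a \emph{closed} trapdoor is the passable state and an \emph{open} trapdoor cannot be stood on. So your claim that "one must cross the trapdoor labelled $B_1$, which is now initially closed, so the only way through is to first roll the block over the switch $u_1$, toggling $B_1$ open" forces nothing: with everything initially closed the player simply walks across $B_1$ without touching any switch, and opening $B_1$ would in fact obstruct rather than enable the crossing. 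Forcing a \emph{closing} is easy (put the trapdoor itself on the path, as in Figure~\ref{fig:eopen}), but forcing an \emph{opening} of a trapdoor located elsewhere cannot be obtained by mirroring that idea, because an open trapdoor never helps passage anywhere. If your corridor does not actually force the openings, the main reduction breaks: for instance, with the $A$, $B$, $u$ and $v$ trapdoors of an edge gadget all closed, the edge effectively contributes its weight to both endpoints, violating the invariant used in the correctness argument.

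The paper's gadget (Figure~\ref{fig:eclosed}) is genuinely asymmetric: it introduces two \emph{auxiliary} trapdoors $x$ and $y$ whose switches lie on the forced path, arranged so that if the player tries to pass while $x$ and $y$ are still closed, stepping on the switches opens them and the block is stranded; the only way through is to first press the switch pair containing $u_1$ (paired with $x$ thanks to the $1\times1\times2$ block), which opens the target trapdoor $u_1$ as a side effect and sets the parity of $x$ so that the subsequent traversal works. That ordering/parity argument is the actual content of the proof and is missing from your proposal. A secondary point: you do not need (and the paper does not build) one-directional gadgets; the paper explicitly allows the player to re-enter the setup gadgets later and argues only that leaving them again cannot change the parity of the switch presses, which is the cleaner way to rule out abuse.
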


\begin{proof}
Figure \ref{fig:eclosed} shows a construction that forces the player to open trapdoor $u_1$. In order to get from the start to the goal the player must travel over the switches $x$ and $y$. If the player attempts to do this with both switches closed, travelling over them will open trapdoors $x$ and $y$ and the player gets stuck.

The player must first trigger the switches $u_1$ and $x$ so that they are open. Only then can they pass over the switches $x$ and $y$ since trapdoor $x$ will be closed while $y$ will be opened.

This construction needs to be duplicated 6 times per edge and linked together so that we are forced to open all the appropriate trapdoors before reaching the rest of the level.
\qed
\end{proof}

Note that in both these cases it is possible to later, while solving the rest of the level, undo previous moves and travel back up these gadgets, but upon leaving the gadgets again the parity of the switches can not have changed.

Furthermore, in both proofs connecting the gadgets requires the path of the block to cross over itself. Crossovers are possible in Bloxorz due to the 1 by 1 by 2 shape of the block. By designing the level carefully, we can ensure the block always reaches the crossing in a specific orientation so that it can not cross on to the other path.

\section{Hardness of general puzzles with buttons}

Viglietta \cite{vig14} established that a game featuring 2-buttons and doors is \NP-hard and asked whether this could be improved to show such games are \PSPACE-hard. We show that such games are indeed \PSPACE-hard.

Note that for a door, the notions of open and closed are inverted from that of a trapdoor.

\begin{theorem}\label{theorem:buttongame}
A game where the avatar has to reach an exit location to win that features 2-buttons and doors is \PSPACE-hard.
\end{theorem}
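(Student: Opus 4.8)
The plan is to reduce from \textsc{Restricted NCL} in essentially the same way as for \textsc{Bloxorz}, exploiting the fact that Viglietta's framework gives us an avatar moving on a graph of locations rather than a block on a grid, which actually makes the construction \emph{simpler}: the geometric headaches of Bloxorz (crossovers, the $1\times1\times2$ shape, forcing switches to fire in pairs) disappear, and we only need to re-implement the logical content of the edge and vertex gadgets. The key structural difference is that a Bloxorz switch \emph{toggles} a trapdoor, whereas a button only opens or only closes a door. So first I would show how to simulate a toggle with two $2$-buttons and a pair of doors: keep a ``state bit'' encoded in which of two doors is open, and when the avatar passes the toggle location route it through a short fork that forces it to take the one traversable branch, where it presses a $2$-button that closes the currently-open door and opens the currently-closed one, simultaneously updating any ``downstream'' door that the NCL correspondence requires. (Each button controls $2$ doors; the toggle's effect on a vertex-gadget door is one of those two, so we may need the state to be held redundantly — this is the kind of bookkeeping I would spell out with a small figure.)

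Next I would rebuild the three gadget types. The edge gadget: a corridor from the left corner to the right corner, passing ``$A$''-doors on the way in and ``$B$''-doors on the way out, with buttons along it that toggle the $A/B$ pair and the vertex-side doors labelled $u,v$; traversing it left-to-right forces the avatar to flip all the toggles, which (exactly as in Claim for the Bloxorz edge gadget) leaves it unable to re-exit unless all $A$ are closed or all $B$ are closed, giving the same well-defined notion of edge orientation. The OR vertex gadget: three parallel branches guarded by the three incident-edge doors $x,y,v_1$, all leading to the exit corridor, so the avatar can leave iff at least one is closed — this directly re-uses the ``self-evident'' OR claim. The AND vertex gadget: arrange the branches so that the only ways out are through the single door $v_1$ (weight-$2$ edge) or through \emph{both} of $x$ and $y$ in series (weight-$1$ edges), re-using the AND claim. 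Then assemble exactly as before: one edge gadget per edge, three vertex-gadget copies per vertex wired by door-correspondence mirroring the NCL graph, a big free-movement region connecting all vertices except the head of the target edge, which instead leads to the exit location; the start is placed where the exit is not directly reachable.

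Finally I would port the correctness argument verbatim: a legal NCL reversal sequence maps to traversing the corresponding edge gadgets in order, and conversely each traversal of an edge gadget leaving via vertex $u$ corresponds to reorienting that edge away from $u$, with $v$'s inflow automatically fine (weight only increases there) and $u$'s inflow guaranteed by the vertex gadget; reaching the exit is possible iff the target edge can be reversed. Since the theorem only asks for \PSPACE-hardness, no containment argument is needed.

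\medskip

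The main obstacle is the toggle-to-button simulation and the accounting of which doors each $2$-button controls: unlike a Bloxorz switch, a $2$-button is limited to two doors and acts monotonically, so I must verify that one press per traversal suffices to update (i) the local $A/B$ state bit and (ii) exactly one of the vertex-gadget doors $u$ or $v$ — and that doing the traversal only ``partway'' is never advantageous, for the same reason as in the Bloxorz proof (wanting the edge to count toward the other vertex forces opening an $A$ door, which forces the exit-and-check on the right). I would also note the minor point, as the Bloxorz proof does, that connecting gadgets may require the avatar's route to revisit toggle locations, but that the parity/state cannot change between entering and leaving a gadget, so no illegitimate shortcut is introduced.
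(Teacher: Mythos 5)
Your overall route is the same as the paper's: reuse the NCL reduction built for \textsc{Bloxorz}, replace trapdoors by doors, keep the edge and vertex gadgets and the assembly, and port the correctness argument verbatim; that part is fine, and you are also right that no containment argument is needed since only hardness is claimed.

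The one place where this theorem has genuinely new content is exactly the point you flag as your main obstacle, and there your construction as described does not quite work. You propose to simulate a toggle by a fork whose branches are guarded by a pair of auxiliary ``state-bit'' doors, with a 2-button in each branch that closes the currently open state door, opens the other, \emph{and} updates the downstream vertex-gadget door --- that is three doors for a single 2-button, exceeding its arity; your parenthetical (``the state may need to be held redundantly'') concedes this but leaves it unresolved, and keeping redundant copies of a state bit in sync itself consumes button arity. The paper's resolution is simpler and needs no fork at all: in the Bloxorz edge gadget the switches only ever fire in forced pairs ($B_i$ together with $u_i$, and $A_i$ together with $v_i$), and the two trapdoors of such a pair are always in complementary states, so each switch pair is replaced by \emph{two} 2-buttons whose two controlled doors are exactly $B_i$ and $u_i$: one button opens $B_i$ and closes $u_i$, the other closes $B_i$ and opens $u_i$. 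The ``state bit'' is thus the functional pair $B_i,u_i$ itself, no auxiliary doors are introduced, and nothing must force which button is pressed: a press merely sets the pair to one of its two complementary configurations and skipping a press changes nothing, so the invariant underlying the edge-gadget claim (leaving requires all $A$ or all $B$ closed, which forces the corresponding $v$ or $u$ doors open) is preserved and the gadget behaves exactly as in Bloxorz. With that single correction --- identifying your state-bit doors with $B_i$ and $u_i$ rather than adding new ones --- your argument coincides with the paper's.
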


\begin{proof}
Our construction to show \PSPACE-hardness of \textsc{Bloxorz} can be adapted to show \PSPACE-hardness of a game featuring 2-buttons and doors. 
Replace every trapdoor by a door. In the edge gadget of Figure \ref{fig:edge_gadget} replace every pair of switches $B_i$ and $u_i$ by a pair of 2-buttons: one 2-button that opens $B_i$ and closes $u_i$ and another that closes $B_i$ and opens $u_i$. Do the same for switches labelled $A$ and $v$. This new construction functions in the exact same way as the edge gadget for Bloxorz.
\qed
\end{proof}

Figure \ref{fig:viglietta} shows an example of this construction. The small white rectangles depict buttons, and are labelled to show which doors they open (marked with a plus sign) and which doors they close (marked with a minus sign). The doors labelled $x_1,\ldots,x_4$ correspond to other incident edges of $u$ and $v$.

\begin{figure}[th]
\centering
\includegraphics[scale=1.05]{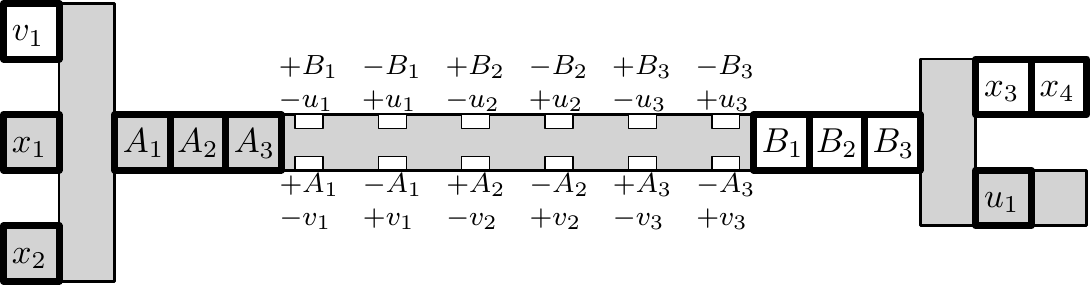}
\caption{Construction showing the edge gadget, attached to an OR vertex gadget on the left and an AND vertex gadget on the right, modified to work with 2-button games.}
\label{fig:viglietta}
\end{figure}

\begin{theorem}\label{theorem:viginitial}
A game that matches the conditions of Theorem \ref{theorem:buttongame} is \PSPACE-hard, even when all doors are initially closed. When all doors are initially open the problem can be decided in polynomial time.
\end{theorem}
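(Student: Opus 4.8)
The statement has two halves that require quite different arguments.

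For the hardness half (all doors initially closed), the plan is to adapt the construction behind Theorem~\ref{theorem:buttongame} in the same spirit that the earlier Bloxorz results handle fixed initial trapdoor states. Since a door is exactly a trapdoor with the meanings of ``open'' and ``closed'' swapped, demanding that all doors start closed is the door-game analogue of demanding that all Bloxorz trapdoors start open. In the edge gadget of Figure~\ref{fig:viglietta}, the two $2$-buttons attached to a pair of doors $(B_i,u_i)$ are ``oppositeness''-preserving: whenever $B_i$ and $u_i$ are in opposite states, one of the two buttons is a no-op and the other moves the pair to the opposite state, so the pair together with its two buttons behaves just like a single toggle acting on $(B_i,u_i)$, and honest play maintains this invariant. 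The only thing that breaks when we begin from the all-closed configuration is that each such pair starts with both doors closed rather than opposite. I would therefore prepend a short chain of forcing gadgets — one per door pair, six per edge gadget, playing the role of Figure~\ref{fig:eopen} — each of which blocks the avatar behind the door (say $u_i$) that should be open in the canonical initial configuration, and whose only clearing button is the genuine edge-gadget button ``close $B_i$, open $u_i$''. Pressing it simultaneously establishes the oppositeness invariant and opens $u_i$, after which the gadget is permanently cleared; once the avatar has passed all such gadgets, what remains is precisely the \PSPACE-hard instance of Theorem~\ref{theorem:buttongame}. Correctness would then be argued as in the Bloxorz case: because the buttons in the forcing gadgets are the real edge-gadget buttons, re-entering a forcing gadget only ever amounts to pressing those same buttons again, which the analysis underlying Theorem~\ref{theorem:buttongame} already accommodates, and no solution of the main instance needs to return to the preamble.

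For the easy half (all doors initially open), the key point is that pressing buttons is never necessary. I would establish, by induction on the number of button presses in a play, the invariant: if every door starts open, then every avatar position reachable by some play is also reachable with no presses at all, i.e.\ in the fixed configuration in which every door is open. For the inductive step, consider a play reaching a position $q$ and look at its last button press, occurring at some position $p$ and yielding door configuration $C$; the avatar then walks from $p$ to $q$ through passages open in $C$, hence through passages open when all doors are open, while the prefix of the play up to that press reaches $p$ with one fewer press and so, by induction, reaches $p$ in the all-open configuration — whence $q$ is reached there as well. Consequently the instance is a ``yes'' instance if and only if the exit is reachable from the start in the single fixed graph obtained by opening (equivalently, deleting) every door; since the avatar has only polynomially many positions and the number of doors is polynomial, this is an ordinary reachability question solvable in polynomial time.

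I expect the second half to be essentially routine — its only delicate point is confirming that, in a game matching the conditions of Theorem~\ref{theorem:buttongame}, the door configuration is the only mutable component of the state, so that the all-open graph really is fixed. The real work is in the first half: one must design the forcing preamble using only $2$-buttons so that it genuinely compels the intended presses, verify that the extra passages it introduces create no unintended shortcuts through the main construction, and confirm that re-entering the preamble during the solution cannot alter the parity of any door in a way that invalidates the reduction. Reusing the edge gadget's own $2$-buttons and blocking each forcing gadget with precisely the door it is meant to open is what should make all three points go through, and carefully checking them is the main obstacle.
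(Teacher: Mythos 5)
\noindent Your second half (all doors initially open) is correct and matches the paper's argument: since buttons are optional and opening is the permissive state, pressing buttons can never help, so the instance reduces to plain reachability in the all-open level; your induction on the last press is a fine way to make the paper's one-line justification precise. The gap is in the hardness half.

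The forcing preamble you describe cannot be built as stated. A button and a door are single game squares: the button ``close $B_i$, open $u_i$'' must remain inside the edge corridor (it is needed there for every later reorientation of that edge), and the door $u_i$ must remain inside the corresponding vertex gadget (it is exactly what implements the inflow check). So you cannot also place ``the genuine edge-gadget button'' and ``the door $u_i$'' in a separate preamble corridor; you must either duplicate the button or thread the preamble physically through the gadget interiors. If you duplicate, i.e.\ add a second button that opens $u_i$ (guarded so it is usable only once at the start), then $u_i$ becomes a door controlled by three buttons --- this is precisely the paper's construction, whose whole content is the omitted design of those additional doors and buttons, and your claimed advantage (and the ``correctness comes for free because only the original buttons are reused'') evaporates. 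If instead you thread the preamble through the gadgets, you create new entrances/exits at button squares of edge gadgets and at door squares of vertex gadgets, all chained to the start and to the free area; through such passages the avatar could press edge buttons, or leave an edge gadget, without ever passing through a vertex gadget, and the soundness of the entire reduction rests on exactly that being impossible. Hence your assertion that re-entering a forcing gadget ``only ever amounts to pressing those same buttons again, which the analysis underlying Theorem~\ref{theorem:buttongame} already accommodates'' is unjustified: that analysis depends on \emph{where} the buttons are reachable from and on the player being forced to exit through a constraint-checking vertex gadget, not merely on \emph{which} buttons get pressed. Note also that the paper explicitly records that its all-doors-closed construction requires 3-doors and poses the 2-button/2-door all-closed case as an open problem; your scheme, if it worked as written, would use only 2-doors and thereby settle that open question, which is a strong indication that the verification you defer is not routine but is where the proof actually stands or falls.
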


\begin{proof}
We can create a construction that allows the player to open certain doors once at the beginning of a level, using additional doors and buttons to enforce that this can only be done once at the beginning of the level. The details of this construction are omitted for brevity. When all doors are initially open we do not need to trigger any buttons and can walk straight to the goal of the level if there exists a path. We can check this in polynomial time using a standard path finding algorithm.
\qed
\end{proof}

\subsection{$k$-Doors}

Similar to the notion of a $k$-button (that controls up to $k$) doors, we may also consider $k$-doors: a door that is controlled by up to $k$ buttons. Our proof for the \PSPACE-hardness of games with 2-buttons works with 2-doors. This result can not be improved upon, since if a game has no other features than 2-buttons and 2-doors, it can be decided by a $\PSPACE$ algorithm.

A game with 2-buttons and 1-doors is \NP-hard \cite{vig14}. This can not be improved upon either, as a game with no other features than 2-buttons and 1-doors can be decided by a non-deterministic polynomial algorithm that guesses which buttons and in what order they are pressed. Since pressing a button is idempotent, we do not need to press any button more than once.

A game with 1-buttons and $k$-doors can be decided in polynomial time using a path finding algorithm, pressing any buttons that open doors as it finds them. If such games feature crossovers (i.e. the level can be non-planar) it is P-hard \cite{vig14}.

In summary, we have the following results:

\begin{table}[h]
\centering
\renewcommand{\arraystretch}{1.2}
\begin{tabular}{ | c | c | c | }
  \hline			
 & $l=1$ & $l\geq 2$  \\
 \hline

  $k=1$ & \textsc{P} & \textsc{P} \\
  \hline

  $k\geq 2$ & \NP & \PSPACE \\
   \hline
 
\end{tabular}
\caption{Hardness of games with $k$-buttons and $l$-doors}
\end{table}

It is an open problem whether a game with 2-buttons and 2-doors remains \PSPACE-hard when all doors are initially closed. Theorem \ref{theorem:viginitial} that shows that games with 2-buttons are \PSPACE-hard even when all doors are initially closed requires 3-doors.

\section{Variants of Bloxorz}

\begin{savenotes}
\begin{table}[H]
\centering
\renewcommand{\arraystretch}{1.2}
\begin{tabular}{ | c || c | c | c || c | c | c | }
  \hline			
 & \multicolumn{3}{c|}{$\mathbf{1\times1\times2}$} & \multicolumn{3}{|c|}{$\mathbf{1\times1\times1}$}  \\
 \hline
   & Dec. & Opt. & Moves &  Dec. & Opt. & Moves \\
  \hline \hline
 - & \P & \P & $\Theta(n)$ 
   & \P & \P & $\Theta(n)$ \\  \hline

  Switches+Trapdoors & \PSPACE-C & \PSPACE-C & $2^{\Theta(n)}$\footnote{\label{foot2}This follows from our \PSPACE-completeness proof. Solving a configuration created in this way from an \textsc{NCL} instance can take exponentially many moves.}
  & \P & \NP-C\footnote{\label{foot1}Metatheorem 2 of \cite{fori10}. Collecting items is implemented by forcing a number of trapdoors to be closed before the exit square can be reached.} & $\Theta(n^2)$ \\ \hline
  
     Single-use tiles & \NP-C & \NP-C & $\Theta(n)$ 
    & \P & \P & $\Theta(n)$ \\  \hline
    
    S+T+Single-use & \PSPACE-C & \PSPACE-C & $2^{\Theta(n)}$\textsuperscript{\ref{foot2}}
    & \NP-C\footnote{Metatheorem 1 of \cite{vig14}. Location traversal can be implemented by forcing the player to close a number of trapdoors.} & \NP-C & $\Theta(n^2)$ \\  \hline
    
\end{tabular}
\caption{Complexity of various Bloxorz variants}
\label{tab:complexity}
\end{table}
\end{savenotes}

In order to more closely identify what features make Bloxorz hard, we now look at various simplified variants and determine their complexity. Table \ref{tab:complexity} shows the complexity for various variants: we list the complexity of the corresponding decision problem (determining if a solution exists) and optimization problem (determining whether a solution with at most $k$ moves exists) and also the number of moves required in the worst case. We will not discuss all of these results as some of them are trivial.

We consider the regular 1 by 1 by 2 variant of Bloxorz but also consider what happens when we use a 1 by 1 by 1 cube instead. We look at various game elements: trapdoors and switches as previously considered but also single-use tiles. Single-use tiles are game squares that the block can only pass over once before they break and are removed from the game board. Single-use tiles are actually not featured in the original Bloxorz game. We discuss them here since they are an example of a game feature that, when combined with switches and trapdoors will make the decision for the 1 by 1 by 1 case \NP-complete.

\begin{theorem}
1 by 1 by 2 Bloxorz with single-use tiles is NP-complete, while 1 by 1 by 1 Bloxorz with single-use tiles can be solved in polynomial time.
\end{theorem}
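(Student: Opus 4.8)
The plan is to treat the two claims independently. For the first claim, that $1\times1\times2$ Bloxorz with single-use tiles (but no switches or trapdoors) is \NP-complete, I would first establish membership in \NP. The state of such an instance is the triple consisting of the block's position, its orientation, and the set $S$ of single-use tiles not yet destroyed; since single-use tiles are only ever removed, $S$ decreases monotonically, so along any shortest solution it takes at most $n+1$ distinct values, each occurring on a contiguous stretch of the solution. Within a stretch on which $S$ is constant there are only $O(n)$ distinct position/orientation pairs, and a shortest solution cannot repeat a full state, so its length is $O(n^2)$. A polynomially long sequence of moves is therefore a certificate that is checkable in polynomial time, which gives \NP membership.

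For \NP-hardness I would reduce from \textsc{3-Satisfiability}. The level is laid out so that the block must travel from the start square to the goal square through two consecutive phases. In the first phase it passes through one variable gadget per variable $x_i$; this gadget offers two single-use corridors, a ``true'' corridor and a ``false'' corridor, which reconverge so that exactly one can be traversed. The true corridor of $x_i$ is routed so as to roll over --- and thereby destroy --- one designated single-use tile in the lane of every clause gadget that contains the literal $\lnot x_i$, and symmetrically the false corridor destroys a tile in the lane of every positive occurrence of $x_i$. In the second phase the block passes through one clause gadget per clause: a gadget for $\ell_1 \vee \ell_2 \vee \ell_3$ consists of three parallel single-use lanes of which at least one must be traversed, the lane for $\ell_j$ being exactly the one destroyed when $\ell_j$ is falsified. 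A satisfying assignment then yields a traversal that leaves an intact lane in every clause gadget, and conversely any successful traversal defines an assignment under which every clause gadget remained passable, i.e. satisfied. Single-use tiles on the connecting corridors, together with the orientation-controlled crossovers already used earlier in the paper, keep the block from revisiting gadgets, bypassing the clause phase, or committing to a variable's corridor only partially.

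The main obstacle in this direction is the geometric engineering. Because a $1\times1\times2$ block alternates between standing on one tile and lying across two tiles as it rolls, every corridor must be dimensioned so that the block enters and leaves in compatible orientations and destroys exactly the intended single-use tiles and no others; moreover a true (or false) corridor for a variable with many occurrences must reach many far-apart clause gadgets, so the routing is genuinely non-planar and needs crossovers at which the block's forced orientation prevents it from changing tracks. Proving that the construction admits no unintended shortcut --- in particular, that the goal is unreachable without passing through every gadget --- is the step I expect to need the most care.

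For the second claim, that $1\times1\times1$ Bloxorz with single-use tiles is in \P, the key observation is that a unit cube always occupies exactly one tile, and a roll takes it to an orthogonally adjacent tile while destroying the tile it leaves if that tile is single-use. Hence the cube can never return to a tile it has previously left, so its trajectory is a simple path in the graph $H$ whose vertices are the tiles and whose edges join adjacent tiles; conversely any simple start-to-goal path in $H$ is realisable, and if the level also contains ordinary tiles these may be reused but any walk through them can be shortcut to a simple path using distinct tiles. Therefore the instance is solvable exactly when the start and goal tiles lie in the same connected component of $H$, which is decided in linear time by breadth-first search (the requirement that the cube be ``upright'' on the goal is vacuous for a cube).
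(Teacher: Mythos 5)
Your proposal is correct and follows essentially the same route as the paper: \NP-hardness by reduction from \textsc{3-sat} in which choosing a variable's truth value forces the block to destroy single-use tiles lying on the clause lanes of the falsified literals (the paper realizes this concretely as a grid of vertical variable lines and horizontal clause lines whose conflict crossings are the single-use tiles, using block-length parity so the lying block cannot turn at crossings — the same orientation trick you invoke), plus a trivial \NP membership argument. For the $1\times1\times1$ case your conclusion and BFS algorithm match the paper's; just note the intermediate claim that the cube ``can never return to a tile it has previously left'' is false for ordinary tiles, but your subsequent shortcut-to-a-simple-path remark repairs this and recovers exactly the paper's argument.
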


\begin{proof}
We show that 1 by 1 by 2 Bloxorz with single-use tiles is \NP-complete by reduction from \textsc{3-sat}.

\begin{figure}[th]
\centering
\includegraphics[scale=0.8]{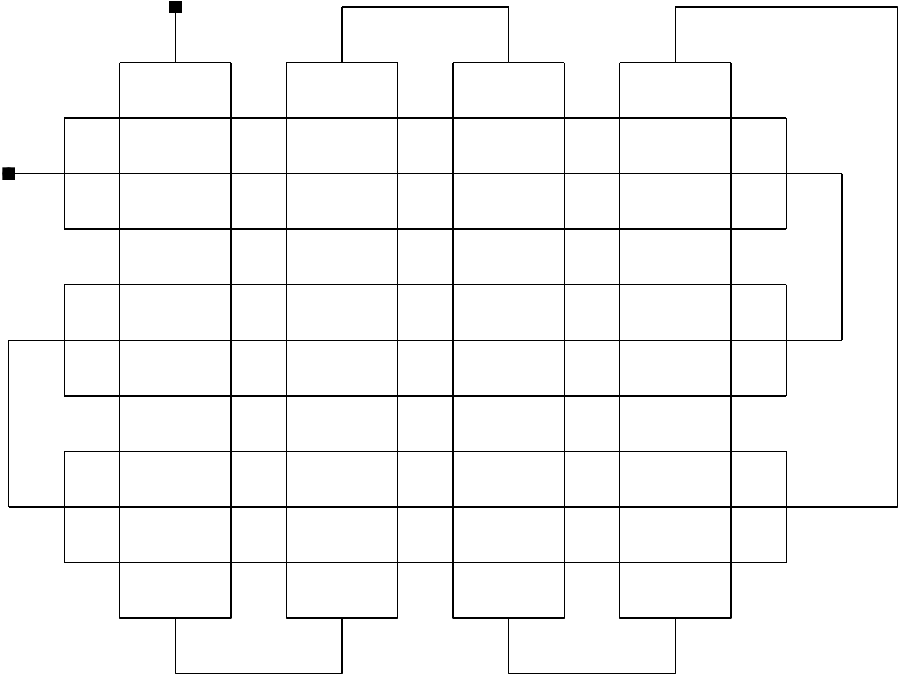}
\caption{Example for NP-hardness proof of Bloxorz with 1 by 1 by 2 block and single-use tiles}
\label{fig:bloxorz-dis}
\end{figure}

See Figure~\ref{fig:bloxorz-dis} for an example; it represents an instance
with four variables and three clauses. The two marked positions are the starting and ending squares. By properly choosing the lengths of parts of the construction, we can ensure
that the block when reaching a cross between a horizontal and vertical line is not in upright position, so the block must continue on the
same row or column and cannot `take a turn'. 
When reaching at
a branch between two vertical lines, the player must choose one of these, which corresponds to setting a variable to true or false. 
Each set of three linked horizontal lines represents a clause; the player must choose one of these to traverse. 
Thus, by construction, a path from start to target must necessarily use one from each pair of
vertical lines (these represent variables and the choice represents which literal is true)
and one from each triple of horizontal lines (these represent clauses; the player must choose the branch corresponding to a true literal).
If a horizontal line represents a literal from a clause, and a vertical line represents the negation of a literal, then the square where
these lines cross is a single-use square: in this way, we enforce that each clause has a true literal. Membership in $\NP$ is trivial. \qed
\end{proof}

We can decide Bloxorz with trapdoors and switches in polynomial time when we do not use a rectangular block but a 1 by 1 by 1 cube:

\begin{theorem}
In the 1 by 1 by 1 case, a Bloxorz instance with switches and trapdoors be solved in polynomial time, taking $\theta(n^2)$ moves in the worst case.
\end{theorem}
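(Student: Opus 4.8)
The plan is to solve the $1\times1\times1$ reachability problem by a monotone fixpoint computation and to argue that a trapdoor never needs to be re-opened. First I would record the simplifications that the $1\times1\times1$ shape buys: since the cube always occupies exactly one tile, a game state is completely described by the pair consisting of the current tile and the trapdoor configuration; a move is just a step to an orthogonally adjacent tile that is a normal tile, a switch tile, or a \emph{closed} trapdoor; and because a move covers a single tile, each move toggles at most one trapdoor (exactly when the destination is a switch). There is no orientation to track and no way to topple off the board via a legal step.

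The crux is a monotonicity property. Order trapdoor configurations by inclusion of their closed sets; if $c \le c'$ then every walkable tile of $c$ is walkable in $c'$, and any legal play from a state $(p,c)$ stays legal from $(p,c')$ once we delete the toggles in it that would merely have opened trapdoors. In particular, closing a trapdoor is never harmful and opening one is never necessary, so we may assume every switch-press closes its trapdoor (pressing the switch of an already-closed trapdoor only wastes moves, or in a degenerate level strands the cube on a dead-end switch tile).

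This yields the algorithm. Maintain the set $C$ of tiles reachable from the start $S$ in the current configuration, computed by breadth-first search on the grid graph with open trapdoors deleted and switches treated as ordinary walkable tiles. While $C$ contains the switch of some currently-open trapdoor $t$, walk there, step on it to close $t$, and recompute $C$; stop when no open trapdoor has its switch in $C$, and answer ``yes'' iff the goal $G \in C$. Each round closes at least one trapdoor, so there are $O(n)$ rounds, each a BFS over $O(n)$ tiles; the witnessing play makes at most $O(n)$ walks of length $O(n)$ to switches and then one walk to $G$, giving $\Theta(n^2)$ moves.

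Correctness of the ``if'' direction is immediate, since the algorithm's moves form a legal play. For the ``only if'' direction I would take any winning play, list the initially-open trapdoors $\tau_1,\ldots,\tau_m$ it closes, in the order each is first closed, and show by induction on $j$ that the switch of $\tau_j$ lies in the region the algorithm has reached after closing $\tau_1,\ldots,\tau_{j-1}$: up to the moment the play first closes $\tau_j$ the only initially-open trapdoors it has closed are among $\tau_1,\ldots,\tau_{j-1}$, so every tile it has stood on is walkable in the configuration that closes exactly the initially-closed trapdoors together with $\tau_1,\ldots,\tau_{j-1}$, and by induction this configuration is dominated by the algorithm's current one. Hence the algorithm closes all of $\tau_1,\ldots,\tau_m$, its final configuration dominates every configuration the play ever had, and $G$ --- reachable by the play --- is reachable in the final configuration. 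I expect the main obstacle to be making this induction airtight: ruling out the player gaining anything by temporarily opening an initially-closed trapdoor (covered by the ``never open'' part of the monotonicity property) and carefully checking that the real play's configuration always lies below the one the algorithm has built. Finally I would give a comb-shaped level forcing the cube to shuttle $\Omega(n)$ times between a distant switch and a corridor, each trip of length $\Omega(n)$, to establish the $\Omega(n^2)$ lower bound on the number of moves.
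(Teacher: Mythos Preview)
Your approach matches the paper's: greedily expand the reachable region by closing one open trapdoor at a time, argue that opening a trapdoor is never useful, and give a shuttling instance for the $\Omega(n^2)$ lower bound. The only point to tighten is the monotonicity step---you cannot literally ``delete'' a toggle when the cube's path passes over a switch; the paper makes the fix explicit by observing that whenever stepping on a switch opens a trapdoor you can step off and immediately back on to re-close it (two extra moves), and this same trick is what lets the algorithm's walk to the target switch pass harmlessly through any other switches along the way.
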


\begin{proof}
The construction of Figure \ref{fig:bloxorz-dis} cannot be used for the 1 by 1 by 1 cube, as this block can take turns at crossings. In the case of a 1 by 1 by 1 block, we never need to visit a square more than once, and thus we can treat single use squares as normal squares, and a simple BFS will find a shortest solution in polynomial time.

1 by 1 by 1 Bloxorz with switches and trapdoors can be solved in polynomial time using at most quadratically many moves. This is because whenever we trigger a switch opening a trapdoor we can trigger the same switch again to close it, either by taking a step back in our path or taking a step forward in our path and then undoing that same move. We can keep track of a set of tiles that we can reach without having to pass a currently open trapdoor, and continually expand this set by closing one more trapdoor until this set includes the goal square or there are no more trapdoors we can close. If a solution exists, this algorithm can find one of length $O(n^2)$.

To show that 1 by 1 by 1 Bloxorz with switches and trapdoors can require a quadratic number of moves, we can use a linear layout, with successive squares $s_r$, $t_{r-1}$, $s_{r-2}$, \ldots, $t_3$, $s_2$, $t_1$, the
starting square, $s_1$, $t_2$, $s_3$, \ldots, $s_{r-1}$, $t_r$ and the target square, with $s_i$ the switch corresponding to trapdoor $t_i$. 
It is easy to see that a solution involves the player to successively touch all the switches and thus repeatedly move from left to right
and back, making a quadratic number of moves.
\qed
\end{proof}

However, when trapdoors, switches and single-use tiles are combined, 1 by 1 by 1 Bloxorz becomes \NP-complete again.

\section{Conclusion}

We have shown by a reduction from \textsc{restricted NCL} that \textsc{Bloxorz} is \PSPACE-complete, even when the trapdoors are initially fixed either open or closed. We also showed how our proof can be adapted to general games and showed that games with 2-buttons and doors are not only \NP-hard, but also \PSPACE-hard, answering an open question of Viglietta \cite{vig14}.

We then examined some other variants of Bloxorz, including variants where the 1 by 1 by 2 block is replaced by a 1 by 1 by 1 cube and variants where single-use tiles are included. We showed that Bloxorz with single-use tiles but without switches or trapdoors is \NP-complete, and that most variants with a 1 by 1 by 1 block can be polynomially solved. Only the 1 by 1 by 1 variant featuring both single-use tiles and switches and trapdoors is \NP-complete.

We think our result clearly illustrates the power of the NCL framework. It is not at all obvious that an universal quantifier may be constructed from the elements in Bloxorz;
the work by Demaine and Hearn \cite{hearn02,hdbook} is instrumental for establishing
the complexity of these and other puzzle and reconfiguration problems.

\subsubsection*{Acknowledgement}
We thank Damien Clarke for graciously allowing us to reproduce screenshots of Bloxorz.

\bibliographystyle{splncs}

\begin{thebibliography}{13}
\bibitem{dx}
Bloxorz, DX Interactive. \url{http://dxinteractive.com/\#!/bloxorz}

\bibitem{caveman1}
Ice Cube Caveman, Google Play Store. \url{https://play.google.com/store/apps/details?id=com.TwistedGames.Caveman}

\bibitem{caveman2}
Ice Cube Caveman, iTunes App Store.
\url{https://itunes.apple.com/app/ice-cube-caveman/id456179222?mt=8}

\bibitem{vig12}
Giovanni Viglietta: Gaming Is a Hard Job, But Someone Has to Do It! In: Kranakis, Evangelos and Krizanc, Danny and Luccio, Flaminia. Fun with Algorithms 2012. LNCS, vol. 7288, pp. 357-367. Springer, Heidelberg (2012)

\bibitem{vig14}
Giovanni Viglietta: Gaming Is a Hard Job, but Someone Has to Do It! Theory Comput. Syst. 54(4): 595-621 (2014)

\bibitem{fori10}
Michal Fori\u{s}ek: Computational Complexity of Two-Dimensional Platform Games. In: Boldi, Paolo and Gargano, Luisa. Fun with Algorithms 2010. LNCS, vol. 6099, pp. 214-227. Springer, Heidelberg (2010)

\bibitem{hearn02}
Robert A. Hearn, Erik D. Demaine: PSPACE-completeness of sliding-block puzzles and other problems through the nondeterministic constraint logic model of computation. Theoretical Computer Science, vol. 343, pp. 72-96. (2005)

\bibitem{buchin12}
K. Buchin, M. Buchin: Rolling block mazes are PSPACE-complete. Journal of Information Processing, vol. 20 no. 3, pp. 719-722. (2012)

\bibitem{sav70}
Walter J. Savitch: Relationships between nondeterministic and deterministic tape complexities. Journal of Computer and System Sciences, vol. 4, pp. 177-192. (1970)

\bibitem{hdbook}
Robert A. Hearn and Erik D. Demaine:
Games, Puzzles, and Computation.  A K Peters/CRC Press (2009)

\end{thebibliography}

\end{document}